\let\latexaddtocontents\addtocontents
\let\addtocontents\latexaddtocontents
\newcommand{\IN}{{\mathbb{N}}}
\newcommand{\IZ}{{\mathbb{Z}}}
\newcommand{\abs}[1]{\left| #1 \right|}
\renewcommand{\d}{\,\mathrm{d}}
\renewcommand{\sc}[1]{\left< #1 \right>}
\newcommand{\nn}[1]{\left\Vert #1 \right\Vert}
\newcommand{\tr}{\operatorname{tr}}
\newcommand{\Id}{\mathds{1}}
\newcommand{\Def}{\mathcal{D}}
\newtheoremstyle{fgrs} 
                        {0.5em}    
                        {0.5em}    
                        {}         
                        {}         
                        {\bfseries}
                        {}        
                        {\newline} 
                        {}
\newtheoremstyle{mydef} 
                        {0.5em}    
                        {0.5em}    
                        {}         
                        {}         
                        {\bfseries}
                        {}        
                        {\newline} 
                        {}
\theoremstyle{mydef}
\renewcommand{\H}{\mathcal{H}}
\newcommand{\FF}{\mathfrak{F}}
\newcommand{\fin}{{\operatorname{fin}}}
\renewcommand{\d}{{\mathrm d}}
\newcommand{\Fc}{\mathcal{F}}
\newcommand{\ind}{\mathds{1}}
\newcommand{\x}{{\hat{\mathsf x}}}
\newcommand{\expec}[2]{\sc{#1}_{#2}}
\newcommand{\hh}{\mathfrak{h}}
\newtheorem{thm}{Theorem}
\newtheorem{lemma}[thm]{Lemma}
\newtheorem{prop}[thm]{Proposition}
\theoremstyle{definition}
\theoremstyle{remark}
\newtheorem{remark}[thm]{Remark}
\crefname{thm}{Theorem}{Theorems}
\crefname{coro}{Corollary}{Corollaries}
\crefname{hyp}{Hypothesis}{Hypotheses}
\Crefname{hyp}{Hypothesis}{Hypotheses}
\crefname{lemma}{Lemma}{Lemmas}
\Crefname{lemma}{Lemma}{Lemmas}
\crefname{prop}{Proposition}{Propositions}
\crefname{enumi}{}{}
\Crefname{enumi}{}{}
\crefname{equation}{}{}
\Crefname{equation}{}{}
\newcommand{\Sp}[1]{\mathcal T_1^+(#1)}
\newcommand{\N}{\mathcal{N}}
\newcommand{\T}{\mathcal{T}}
\newcommand{\ABb}{\partial_{AB}}
\begin{document}
\title{Thermal Area Law for Lattice Bosons}

\author{Marius Lemm}
\email{marius.lemm@uni-tuebingen.de}
\author{Oliver Siebert}
\email{oliver.siebert@uni-tuebingen.de}
\affiliation{Department of Mathematics, University of Tübingen, Auf der Morgenstelle 10, 
72076 T\"ubingen, Germany}

 \begin{abstract}
	A physical system is said to satisfy a thermal area law if the mutual information between two adjacent regions in the Gibbs state is controlled by the area of their boundary.  Lattice bosons have recently gained significant interest because they can be precisely tuned in experiments and bosonic codes can be employed in quantum error correction to circumvent classical no-go theorems. However, the proofs of many basic information-theoretic inequalities such as the thermal area law break down for bosons because their interactions are unbounded.
	  Here, we rigorously derive a thermal area law for a class of bosonic Hamiltonians in any dimension which includes the paradigmatic Bose-Hubbard model. The main idea to go beyond bounded interactions is to introduce a quasi-free reference state with artificially decreased chemical potential by means of a double Peierls-Bogoliubov estimate. 
		
\end{abstract}

\maketitle

\section{Introduction}

In quantum many-body systems with translation-invariant short-ranged interactions the entanglement entropy of the ground state typically satisfies an \textit{area law} -- meaning that it is bounded by a constant times the boundary surface area of $A$ (as opposed to the trivial bound which would entail the volume of $A$). The area law captures our physical intuition that correlations are concentrated on short distances and therefore only occur across the boundary cut. It is extremely useful in practice as it severely restricts the admissible many-body states for approximating ground states (i.e., quantum matter) and can thus serve to overcome the notorious \textit{curse of dimensionality} through the famous density matrix renormalization group (DMRG) numerical algorithm \cite{white1993density,dmrg_higher1,dmrg_higher3,dmrg_higher2}. The connection is clearest for 1D lattice systems, where a state satisfies an area law if and only if it is representable as a matrix product state (MPS) with fixed bond dimension independent of the system size \cite{dalzell2019locally,verstraete2006matrix}. For detailed reviews also covering the higher-dimensional situation, see \cite{eisert2010colloquium,ge2016area}. 

Area laws for the entanglement entropy as described above have their origins in the holographic principle in the context of quantum gravity \cite{bousso2002holographic} and have been numerically observed in a large number of many-body systems. They have been derived for gapped 1D spin systems  \cite{hastings_arealaw,arad2012improved,arad2013area,arad2017rigorous,huang2014area}, for 1D quantum states with finite correlation lengths \cite{brandao2013area,cho2018realistic}, gapped harmonic lattice systems \cite{audenaert2002entanglement,
plenio2005entropy,cramer2006correlations,cramer2006entanglement}, ground states in the same gapped phase as others obeying an area law \cite{van2013entanglement,marien2016entanglement}, models whose Hamiltonian spectra satisfy related
conditions \cite{hastings2007entropy,masanes2009area}, certain frustration-free spin systems \cite{de2010solving}, tree-graph systems \cite{abrahamsen2019polynomial}, models exhibiting local topological order \cite{michalakis2012stability} and 
high-dimensional systems under additional assumptions such as frustration-freeness \cite{masanes2009area,brandao2015entanglement,hastings2007entropy,cho2014sufficient,anshu2020entanglement,anshu2022area}. There has also been recent progress for certain long-range interactions \cite{longrange2,kuwahara2020area}. A general statement in higher dimensions remains elusive and this problem is known as the \textit{area law conjecture}.

The analog of the area law for Gibbs states is called the thermal area law. Gibbs states are important for fundamental reasons
 and their efficient simulability  via tensor networks hinges on thermal area laws \cite{improved,renyi_positivetemperature,PRXQuantum.2.040331}. For Gibbs states, the total correlations between two regions $A$ and $B$ are quantified by their \textit{mutual information} \cite{nielsen2002quantum,alhambra_survey,improved}
$$
I(A:B) = S(\rho_A) + S(\rho_B) - S(\rho_{AB} ),
$$
where $S$ denotes the von Neumann entropy, $\rho_{AB}$ the Gibbs state of the full system, and $\rho_A, \rho_B$ the reduced density matrices corresponding to $A$ and $B$, respectively.
At zero temperature, $I(A:B) $ reduces to twice the entanglement entropy. An area law at positive temperature was derived in a seminal work of Wolf et. al \cite{wolf} who proved for local and bounded interactions that
$$
I(A:B)\leq C\beta \abs{\ABb} 
$$
with $\ABb$ being the boundary region between $A$ and $B$, and
whenever $A$ and $B$ are disjoint and together make up the entire lattice.
Very recently, the $\beta$-scaling result was improved to $ \beta^{2/3}$  \cite{improved} which matters for the experimentally relevant regime of low temperatures ($\beta\to\infty$). This dependence is not far from
optimal, since Gottesman and Hastings found a 1D model for which the scaling of the mutual information is at least $\beta^{1/5}$ for large $\beta$  \cite{gottesman2010entanglement}. Moreover, a generalization to various Rényi generalizations of the mutual information was given in \cite{renyi_positivetemperature} and this has important applications to simulating and approximating Gibbs states \cite{PRXQuantum.2.040331}. Further results of thermal area laws were established for free fermions \cite{free_fermions}, for the entanglement negativity (instead of the mutual information) \cite{entanglement_negativity}, as a result of rapid mixing for dissipative quantum lattice systems \cite{rapid_mixing1,rapid_mixing2} (with a logarithmic correction)   and numerically for some spin chains showing a $\log \beta$-dependence \cite{thermal_numerics}. A current account of thermal area laws and related phenomena is given in \cite{alhambra_survey}. There has also been recent progress in the experimental verification of area laws for both the entanglement entropy and the mutual information at positive temperature by means of ultra-cold atom
simulators \cite{tajik2022experimental}.

Thermal area laws provide us with universal properties of Gibbs states independent of the system size. They are especially valuable for lower temperatures, while at high temperatures  many analytic properties are available, e.g. exponential decay of correlations \cite{araki1969gibbs,gross1979decay,park1995uniqueness,ueltschi2004,kliesch2014locality,frohlich2015some}, the large deviation principle \cite{lenci2005large,netovcny2004large,kuwahara2020gaussian} or the approximate quantum Markov property \cite{kato2019quantum,kuwahara2020clustering}. Such characterizations are important for the computation of Gibbs states, in general an NP-hard problem \cite{barahona1982computational,goldberg2015complexity},  which in turn is fundamental in novel applications such as quantum machine learning \cite{amin2018quantum,anshu2021sample}, semidefinite programming solvers \cite{brandao2017quantum,van2020quantum} or the imaginary-time evolution in the framework of near-term quantum devices \cite{nisq1,nisq2,nisq3,nisq4,nisq5,nisq6,nisq7}.
%

In particular, area laws can be used for the approximation of 
Gibbs state by matrix product operators (MPO) and their higher-dimensional analogs \cite{alhambra_survey,improved,PRXQuantum.2.040331,mpo_arealaw_link}. 
 In special cases like one-dimensional quantum spin systems they gave rise to concrete algorithms for the efficient  approximation by MPO, where the temperature behavior in the area law determines the maximal bond dimension \cite{improved}. Such algorithms also  proved to be useful for the representation of Gibbs states as a convex combination of MPS \cite{convex_mps}, or for the approximation of ground states under a low-energy-density assumption typically observed for gapped systems.
It is generally believed that the scaling of the mutual information with $\beta$ in the low temperature regime is related to the computational complexity of the ground space of the models \cite{alhambra_survey}. Another direct way to use a thermal area law is to note that controlling the mutual information automatically controls all standard correlation functions, see \cite{wolf} and \Cref{sec:truncated correlations}
below. 


A critical limitation of the existing results is that they only hold for \textit{bounded} interactions and bounded local Hilbert space dimension. This is naturally the case for quantum spin systems and lattice fermions. However, for lattice \textit{bosons} as described, e.g., by the paradigmatic Bose-Hubbard Hamiltonian, the interactions are \textit{unbounded} and the standard arguments fail. 

There has recently been a surge of interest in bosonic lattice systems and related models for three main reasons: (i) The Hamiltonians can be experimentally fine-tuned for cold atoms in optical lattices \cite{ultracold_gases_review}, which makes them promising platform for quantum simulation and quantum engineering, see also \cite{childs2014bose}. (ii) Bosonic encoding can be used in quantum information processing which can provide multiple advantages over finite-dimensional discrete-variable (DV) codes \cite{bosonic_coding,qec}. (iii) In many cases the standard techniques of quantum information theory fail (including the derivations of the thermal area law) because of the unbounded interactions.  

Area laws for non-interacting bosons were considered in \cite{cramer2006correlations,cramer2006entanglement,cramer2007statistics}, but the case of interacting bosons, including the paradigmatic Bose-Hubbard model proved elusive to rigorous analysis. A partly numerical investigation was given in \cite{bh_es1} with a focus on the phase transition from Mott insulator to superfluid by analogy with other symmetry breaking transitions \cite{bh_es2,bh_es3,bh_es4}. Recently, Abrahamsen et al.\ rigorously proved an area law for gapped ground states of 1D bosonic lattice Hamiltonians in \cite{abrahamsen2022entanglement} by using a truncation of the local Hilbert spaces and a quantum number tail bound from \cite{tong2021provably}. Their work only concerns the zero temperature case and leaves open the positive temperature case. 


In this work we rigorously derive the first thermal area law for a broad class of bosonic Hamiltonians in any dimension including the paradigmatic Bose-Hubbard model. In a nutshell, our result states that
under natural assumptions on the bosonic lattice gases (e.g., short-ranged hopping), we again have the bound
$$
I(A:B)\leq C\beta \abs{\ABb}
$$
for all $\beta\geq 1$. The precise result is  \cref{th:main} below. The $\beta$-scaling is the same as that found by \cite{wolf} and our proof uses the same basic idea as a starting point, namely to use the Gibbs variational principle to bound the mutual information by a difference of boundary energy expectations (\cref{th:lem1}). However, the unbounded interactions then pose technical difficulties which we overcome by introducing a quasi-free reference state with artificially decreased chemical potential by means of a double Peierls-Bogoliubov estimate. This reduces us to computations with quasi-free states that can be completed with Wick's rule. Further details are explained below and in the appendix.

\section{Setup for infinite-dimensional Hilbert spaces}

A technical point in the description of many-boson systems is that the local Hilbert spaces are infinite-dimensional since the particle number is unbounded. For this reason, we include this short preliminary section in which we recall the elegant approach to thermal area laws via the Gibbs variational principle by Wolf et al.\ \cite{wolf} and note that it adapts straightforwardly to the infinite-dimensional situation. These abstract results are then utilized for the Bose-Hubbard model in the following section.


Let $\hh$ be the local Hilbert space for one site, $\Lambda$ a finite set and $A, B \subseteq \Lambda$ such that $A \cap B = \emptyset$ and $A \sqcup B = \Lambda$. We set
\begin{align*}
\H_A = \bigotimes_{x \in A} \hh, \quad \H_B = \bigotimes_{x \in B} \hh, \quad \H_{AB} = \H_A \otimes \H_B.
\end{align*}
We suppose that the Hamiltonian can be decomposed as
\begin{align}
\label{eq:AB partition}
H_{AB} = H_A \otimes \Id  + \Id \otimes H_B + H_\partial
\end{align}

\begin{figure}
	\centering
	\includegraphics[width=6cm]{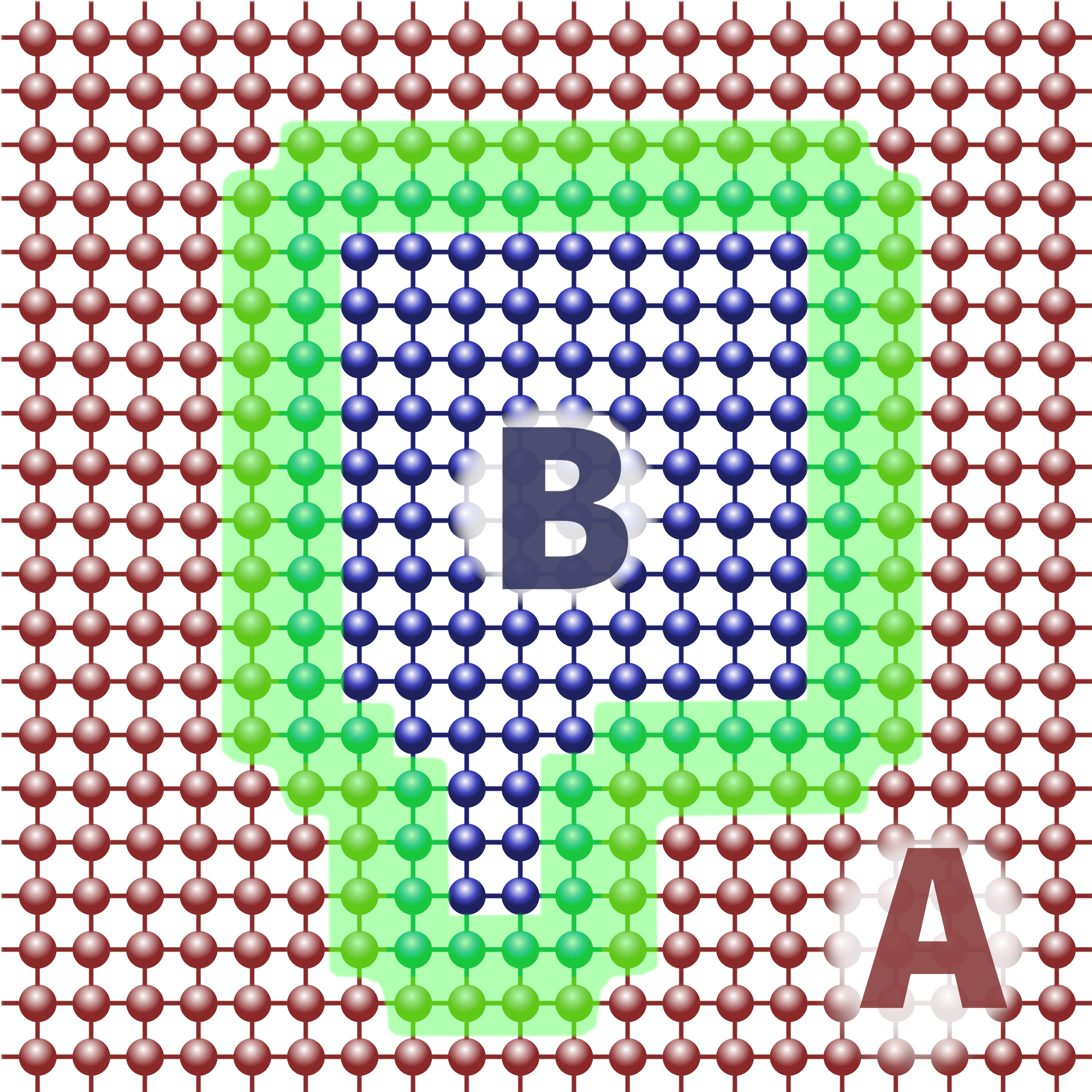}
	\caption{Periodic box for $d=2$ and $L=20$ decomposed into two regions $A$ and $B$. The green region shows the boundary region $\ABb$. The bonds connecting $A$ and $B$ make up the boundary Hamiltonian $H_\partial$.}
	\label{fig:model}
\end{figure}


For details about operator domains, which are relevant because our operators are unbounded, see \Cref{sec:operator domains}. Let $\Sp\H$ denote the set of all density matrices on $\H$. Then we define the free energy as
\[
F^\beta_{AB} = \inf_{\rho \in \Sp{\H_{AB}}}  \Fc^\beta ( \rho),   
\]
where
\begin{align*}
\Fc^\beta ( \rho)  =     \tr( H_{AB} \rho) -  \frac{S(\rho)}{\beta}, \quad
S(\rho) =  - \tr( \rho \ln \rho ).
\end{align*}
The Gibbs state 
\[
\rho_{AB} = \frac{e^{-\beta H_{AB} }}{Z_{AB} } , \quad Z_{AB} =  \tr e^{-\beta H_{AB} },
\]
minimizes the free energy, i.e., $F^\beta_{AB} = \Fc^\beta ( \rho_{AB} )$. (See \cref{th:passivity} for a proof of this in the infinite-dimensional setting.)
We reduce to the $A$, respectively $B$ subsystem by taking partial traces of the Gibbs state
\begin{align*}
\rho_A = \tr_{\H_B} (\rho_{AB}),\qquad \rho_B = \tr_{\H_A}( \rho_{AB}).
\end{align*}

The general bound from \cite{wolf} straightforwardly extends to the infinite-dimensional setting as follows.
\begin{lemma}[Boundary energy controls mutual informat.]
\label{th:lem1} Let $\beta > 0$ and suppose that $\tr e^{-\beta H_{AB} } < \infty$. Then
\[
I(A:B) \leq \beta \tr( H_{\partial} (\rho_A \otimes \rho_B - \rho_{AB} )) .
\]
\end{lemma}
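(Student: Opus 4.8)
The plan is to follow the original argument of Wolf et al.\ and exploit the variational characterization of the Gibbs state provided by \cref{th:passivity}, which guarantees that $\rho_{AB}$ minimizes $\Fc^\beta$ over all density matrices in $\Sp{\H_{AB}}$. The natural trial state to feed into this principle is the product $\rho_A \otimes \rho_B$ built from the marginals of the Gibbs state, because its energy and entropy both decouple across the cut.

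First I would apply the variational inequality $\Fc^\beta(\rho_{AB}) \leq \Fc^\beta(\rho_A \otimes \rho_B)$ and expand both sides. Using the additivity of the von Neumann entropy on product states, $S(\rho_A \otimes \rho_B) = S(\rho_A) + S(\rho_B)$, the entropy contribution reorganizes into exactly $I(A:B)$ upon subtracting $S(\rho_{AB})$, so that after rearranging the free-energy inequality becomes
\[
\frac{1}{\beta} I(A:B) \leq \tr\bigl(H_{AB}(\rho_A \otimes \rho_B - \rho_{AB})\bigr).
\]
Next I would simplify the right-hand side using the decomposition \eqref{eq:AB partition}. The crucial observation is that the bulk terms drop out: since $\rho_A = \tr_{\H_B}\rho_{AB}$ and $\rho_B = \tr_{\H_A}\rho_{AB}$ are precisely the marginals, one has $\tr((H_A \otimes \Id)(\rho_A \otimes \rho_B)) = \tr(H_A \rho_A) = \tr((H_A \otimes \Id)\rho_{AB})$, and likewise for $\Id \otimes H_B$. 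Hence only the boundary contribution $H_\partial$ survives, which yields the claimed bound after multiplying by $\beta$.

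The main obstacle is not the algebra, which is elementary, but justifying each manipulation in the infinite-dimensional setting where $H_A$, $H_B$ and $H_\partial$ are unbounded. Concretely, I would need to ensure that the energy expectations $\tr(H_A \rho_A)$, $\tr(H_B \rho_B)$ and $\tr(H_\partial \rho_{AB})$ are finite and that the three entropies are well-defined so that $I(A:B)$ has meaning; this is exactly where the hypothesis $\tr e^{-\beta H_{AB}} < \infty$ enters, forcing the Gibbs state to have finite energy and finite entropy. I would also have to check that the partial-trace identities underlying the cancellation of the bulk terms remain valid for the unbounded $H_A$ and $H_B$ on the appropriate operator domains, referring to the domain discussion in \cite{SM}.
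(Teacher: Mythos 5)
Your proposal is correct and follows exactly the paper's own argument: apply the Gibbs variational principle with the trial state $\rho_A \otimes \rho_B$, use additivity of the entropy on product states to identify $I(A:B)$, and cancel the bulk terms $H_A$, $H_B$ via the marginal identities so that only $H_\partial$ survives. Your additional remarks on domains and finiteness of the traces in the infinite-dimensional setting match the considerations the paper delegates to the Supplemental Material.
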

\begin{proof} We start from
$$
\Fc^\beta ( \rho_{AB} )=F^\beta_{AB} \leq \Fc^\beta ( \rho_{A}\otimes \rho_B ).
$$
Using $S(\rho_A \otimes \rho_B ) = S(\rho_A) + S(\rho_B)$, we obtain
$$
I(A:B)\leq  \beta \tr( H_{AB} (\rho_A \otimes \rho_B - \rho_{AB} ))
$$
and the right-hand side equals $\beta \tr( H_{\partial} (\rho_A \otimes \rho_B - \rho_{AB} )) $ by basic properties of the partial trace.
\end{proof}
\section{Main Result}

Now we consider the Bose-Hubbard model in the framework of the previous section. Let $\Lambda\equiv \Lambda_L$ denote a box of side length $L$ in the $d$-dimensional lattice with periodic boundary conditions. For $x,y \in \Lambda_L$ we write $x \sim y$ if $x$ and $y$ are nearest neighbors in the periodized lattice. At each site lives a bosonic particle described by the local Hilbert space $\hh = \ell^2(\IN)$.

The total Hilbert space $\H_{AB} = \otimes_{x \in \Lambda_L} \hh$ is isomorphic to the Fock space  $\FF(\ell^2(\Lambda_L))$. On it, we consider the Bose-Hubbard Hamiltonian
\begin{align}
\label{eq:BH Hamiltonian}
H_{AB} = - J \sum_{x \sim y} a_x^\dagger a_y + \frac{U}{2} \sum_{x \in \Lambda_L} n_x (n_x -1) - \mu \N,
\end{align}
where $J\in\mathbb R$ represents the strength of the kinetic nearest-neighbor hopping, $U>0$ the strength of the on-site repulsion and $\mu \in\mathbb R$ the chemical potential, and $\N = \sum_{x \in \Lambda_L} n_x$ is the total number operator. The Hamiltonian is self-adjoint on a suitable domain $\Def(H_{AB})$; see e.g., \cite{faupin2022maximal} and \Cref{sec:operator domains}.



We are now ready to state the main result. 
We decompose the box into two regions $A$ and $B$ with boundary region
\begin{align*}
	\ABb = &\{ x \in A : \exists y \in B : x \sim y \} \\ &\cup \{ y \in B : \exists x \in A : x \sim y \}
\end{align*}
 as shown in Figure \ref{fig:model}.

\begin{thm}[Main result: thermal area law]
	\label{th:main}
	For all $\beta,U,\mu > 0$, we have
	\begin{align*}
		I(A:B) \leq c(J,U,\mu) \max\{1,\beta\} \abs{\ABb}
	\end{align*}
\end{thm}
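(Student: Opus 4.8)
The plan is to combine \cref{th:lem1} with a uniform bound on the local particle density together with a count of the boundary bonds. The boundary Hamiltonian is the sum over the $|\partial A| = 2L^{d-1}$ hopping bonds crossing the cut, $H_\partial = -J\sum_{(x,y)\in\partial A}(a_x^\dagger a_y + a_y^\dagger a_x)$ with $x\in A$, $y\in B$. Since $H_{AB}$ commutes with $\N$, the Gibbs state and its marginals are invariant under $a_x\mapsto e^{i\theta}a_x$, so $\rho_A$ is block diagonal in the $A$-particle number; hence $\langle a_x\rangle_{\rho_A}=0$ and the product-state contribution $\tr(a_x^\dagger a_y\,\rho_A\otimes\rho_B)$ vanishes. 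The remaining term I would bound by Cauchy--Schwarz in the Hilbert--Schmidt inner product, $|\tr(a_x^\dagger a_y\,\rho_{AB})|\le\sqrt{\tr(n_x\rho_{AB})\,\tr(n_y\rho_{AB})}$. Feeding this into \cref{th:lem1} gives
\[
I(A:B)\le 4|J|\,\beta\,L^{d-1}\max_x\langle n_x\rangle_{\rho_{AB}},
\]
and by translation invariance the density is independent of $x$. Everything thus reduces to the density bound $\langle n_x\rangle_{\rho_{AB}}\le c(J,U,\mu)\max\{1,\beta^{-1}\}$, which yields the theorem since $\beta\max\{1,\beta^{-1}\}=\max\{1,\beta\}$.

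The crux is this density bound. The obstacle is that the naive quadratic Hamiltonian $-J\sum_{x\sim y}a_x^\dagger a_y - \mu\N$ with the physical $\mu>0$ is not stable: its single-particle operator is not positive, $e^{-\beta H_0}$ fails to be trace class, and the Bose--Einstein occupations diverge. The on-site repulsion $\frac{U}{2}\sum_x n_x(n_x-1)$ is exactly what stabilizes the interacting model, and the idea is to trade it for an artificial lowering of the chemical potential. I would therefore introduce the quasi-free reference
\[
H_0 = -J\sum_{x\sim y}a_x^\dagger a_y - \nu\,\N,\qquad \nu<-2d|J|,
\]
with $\nu$ below the bottom of the hopping spectrum, so that its single-particle operator is strictly positive with spectral gap $\epsilon_0>0$. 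Then $e^{-\beta H_0}$ is trace class and all its correlators are given by Wick's rule in terms of the occupations $(e^{\beta\epsilon_k}-1)^{-1}$.

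To transfer the density from the interacting to the quasi-free state I would use a double Peierls--Bogoliubov estimate. Writing $\langle n_x\rangle_{\rho_{AB}}$ as a chemical-potential derivative of the interacting pressure and using its convexity, the density is pinched between difference quotients of pressures; bounding the interacting partition functions above and below by the quasi-free one --- each bound a single Peierls--Bogoliubov inequality, in which the positivity of the repulsion together with the lowered $\nu$ keeps the reference normalizable --- reduces the estimate to quasi-free expectations of $n_x$ and of the repulsion. These I would evaluate by Wick's rule, using $(e^{\beta\epsilon_k}-1)^{-1}\le(\beta\epsilon_k)^{-1}$ and $\epsilon_k\ge\epsilon_0>0$. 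The Riemann sum $\frac{1}{|\Lambda|}\sum_k\epsilon_k^{-1}$ converges uniformly in $L$ to a finite integral, giving the $L$-independent constant $c(J,U,\mu)$ and the $\max\{1,\beta^{-1}\}$ behavior (bounded as $\beta\to\infty$ through the finite ground-state density, of order $\beta^{-1}$ as $\beta\to0$).

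The main obstacle I expect is exactly this transfer step: making the double Peierls--Bogoliubov sandwich yield a bound on the density --- a derivative of the free energy rather than the free energy itself --- that is uniform in the system size $L$, while correctly tracking the $\beta$-dependence so that $\beta\langle n_x\rangle$ stays of order $\max\{1,\beta\}$. Balancing the chemical-potential shift $\mu-\nu$ against the repulsion strength $U$ so that the bound survives the limit $L\to\infty$, and checking the operator-domain and trace-class statements needed to justify the manipulations of these unbounded operators, is where the real work lies.
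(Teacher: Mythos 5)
Your proposal follows the same architecture as the paper's proof: reduce the mutual information via \cref{th:lem1} to a count of boundary bonds times the per-site density, then bound the density by passing to a trace-class quasi-free reference with an artificially lowered chemical potential via Peierls--Bogoliubov, and finish with Wick's rule and a Riemann sum of Bose--Einstein occupations. Your Step 1 is a harmless variant (the paper bounds the product-state term by $n_x+n_y$ rather than killing it with $U(1)$ gauge invariance, and uses the operator inequality $\pm(a_x^\dagger a_y+a_y^\dagger a_x)\le n_x+n_y$ instead of Cauchy--Schwarz; both land on $I(A:B)\lesssim \beta J L^{d-1}\langle n_{x_0}\rangle_{\rho_{AB}}$ by translation invariance). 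Where you genuinely diverge is the transfer step, which you correctly identify as the crux but leave as a plan. The paper's resolution is more direct than your pressure-derivative pinching: it uses the elementary pointwise bound $n\le C\,n(n-1)+\tfrac{C}{4}(1+C^{-1})^2$ to dominate $\langle\N\rangle_{\beta H}$ by $\tfrac{2C}{U}\langle W\rangle_{\beta H}+O(L^d)$, absorbs the chemical-potential term, and then applies the two-sided Peierls--Bogoliubov inequality in the monotonicity form $\tr(Pe^K)/\tr(e^K)\le \tr(Pe^{K+P})/\tr(e^{K+P})$ with $P=\beta(W-(\mu+\gamma-2dJ)\N)$ and $K=-\beta H$, which converts the interacting expectation of $P$ \emph{directly} into a quasi-free one in the state $e^{-\beta(H_0+\gamma\N)}$ --- no differentiation of the pressure, no sandwich of partition functions, and no issue of uniformity in $L$ beyond the explicit $O(L^d)$ bookkeeping.

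Your alternative can be completed, but not quite as stated: convexity of $\ln Z$ in $\mu$ gives $\beta\langle\N\rangle_\mu\le h^{-1}(\ln Z(\mu+h)-\ln Z(\mu))$, and the lower bound on $\ln Z(\mu)$ is indeed a single Peierls--Bogoliubov step against the quasi-free reference; however, the upper bound on $\ln Z(\mu+h)$ is \emph{not} a Peierls--Bogoliubov inequality (that inequality points the wrong way, or else requires the expectation of the perturbation in the interacting state you are trying to control). You need instead the pointwise trade $\tfrac{U}{2}n(n-1)-(\mu+h)n\ge\gamma n-c(U,\mu,h,\gamma)$ on each site together with monotonicity of $X\mapsto\tr e^{X}$, which lands you on $\ln Z(\mu+h)\le \beta c L^d+\ln\tr e^{-\beta(H_0+\gamma\N)}$. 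Note that this is the same elementary inequality the paper uses --- the repulsion must be spent to buy the effective gap $\gamma$ either way. Once that is supplied, your Wick/Riemann-sum endgame matches the paper's Step 3, and the claimed scaling $\langle n_x\rangle\lesssim\max\{1,\beta^{-1}\}$ is consistent with the paper's choice $\gamma=\max\{1/\beta,2dJ+1\}$.
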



A few remarks are in order: (i) The repulsiveness assumption $U>0$ is necessary as it ensures stability of the system. (ii) The assumption that $\mu>0$ is standard, cf.\ the usual phase diagram in Figure 13(a) in \cite{ultracold_gases_review}. Indeed, note that if one would take $\mu$ sufficiently negative, then the system becomes effectively devoid of particles in the grand-canonical setting. (iii) The constant $c(J,U,\mu)$ can be made explicit from \eqref{eq:main constant}. (iv) The maximum $\max\{1,\beta\}$ means that  the bound behaves as $\beta$ for low temperatures in accordance with \cite{wolf}. We recall that some growth in $\beta$ is strongly expected without any gap assumption \cite{gottesman2010entanglement}. For high temperature $\beta<1$, we find a temperature independent lower bound which matches the classical situation \cite{wolf}. (v) As explained in \cite{wolf}, a bound on the mutual information implies a bound for the correlation of any pair of bounded observables and the same is true in the bosonic setting, see \Cref{sec:truncated correlations}.

We close the presentation by discussing several extensions of the result which can be obtained from the same methods.
 The proof can be extended to Hamiltonians of the form
\[
H = H_0 + W,
\]
with
\begin{align}
	\nonumber
	H_0 &= \sum_{k=1}^M \sum_{l=1}^k  \sum_{ \substack{ \{x_1, \ldots, x_l, \\ y_1, \ldots, y_l\} \subset \Lambda_L}} \sum_{\nu_1 + \ldots + \nu_l = k} J^{\nu_1 \ldots \nu_l}_{x_1 \ldots x_l y_1 \ldots y_l} \\ &\qquad \times  (a^\dagger_{x_1})^{\nu_1} \ldots (a^\dagger_{x_l})^{\nu_l}  a_{y_1}^{\nu_1} \ldots  a_{y_l}^{\nu_l}, \label{eq:generalizations} \\
	W &= \sum_x f(n_x), \nonumber
\end{align}
where $M \in \IN$, $f \geq 0$ is a polynomially bounded function,  growing faster  than $x^M$, and  $J^{\nu_1 \ldots \nu_l}_{x_1 \ldots x_k y_1 \ldots y_k}$ is uniformly bounded and finite-range. Moreover, the proof also works if we add to the original Hamiltonian density-density interactions of the form
\begin{align*}
\widetilde H = \sum_{x,y}  J_{xy}	n_x n_y,
\end{align*}
provided that $U$ is sufficiently large, and $J_{xy}$ is uniformly bounded and finite-range. 
Furthermore, instead of finite-range interactions, we can consider hopping terms decaying at infinity fast enough, e.g.,  $-\sum_{x,y} J_{xy} a_x^\dagger a_y$ for $J_{xy} \geq 0$, $x \in \IZ^d$, satisfying $J_{xy}\leq (1+|x-y|)^{-\alpha}$ for sufficiently large $\alpha>d$. 

 Finally, the underlying lattice structure can be easily modified as well, though, the constant will be less explicit since it depends on the spectrum of the graph Laplacian.
  In summary, the thermal area law can be proved for an entire class of bosonic lattice gases in any dimension.

The translation-invariance of the underlying lattice is in fact necessary in our current proof. Dropping this assumption would require to control the number of bosons potentially accumulating on the boundaries of the whole system,  see also the proof of \cref{prop:step1}. It is an interesting open problem to remove the translation-invariance assumption. Furthermore, as we control the $H_0$ term with the on-site interaction $W$, we also need the rather strong decay assumption on $J_{xy}$. Therefore, an interesting problem is to develop an  alternative approach allowing for long-range interactions with $\alpha$ arbitrary close to $d$.

%

%
%
%
%
%
%
%
%
%

\section{Sketch of proof of Theorem \ref{th:main} }
The detailed proof of Theorem \ref{th:main} is given in \Cref{sec:proof}. Here we give a sketch of the main ideas.


We decompose the Hamiltonian as
$$
H_{AB} = H_0 + W + (\mu -2dJ) \N
$$
where $H_0 = -  J \sum_{x \sim y} a_x^\dagger a_y + 2dJ \N$ is the shifted kinetic term and $W = \frac{U}{2}  \sum_{x \in X} n_x (n_x -1)$ is the on-site interaction.

To use Lemma \ref{th:lem1}, we again decompose the Hamiltonian as  $H_{AB}=H_A+H_B+H_\partial$ where we define the subsystem Hamiltonians with open boundary conditions along the cut. More precisely, for $X \in \{A,B\}$, 
 we set
\begin{align*}
H_X =  -J \sum_{\substack{x \sim y, \\x,y \in X}} a_x^\dagger a_y + \frac{U}{2}  \sum_{x \in X} n_x (n_x -1) - \mu \sum_{x \in X} n_x 
\end{align*}
and $H_\partial = -J \sum_{\substack{x \sim y,\\x \in A, y \in B}} (a_x^\dagger a_y + a_y^\dagger a_x)$.

Notice that the boundary Hamiltonian $H_\partial$ contains at most $d \abs{\ABb}$ many summands, so the right-hand side of Lemma \ref{th:lem1} seems to exhibits the desired scaling in $L$ and $\beta$. The main challenge is that the hopping terms $a_x^\dagger a_y + a_y^\dagger a_x$ between $A$ and $B$ are unbounded in contrast to the cases of spin systems or lattice fermions.


Our first idea is that since expectations with respect to the full Gibbs state are rather difficult to handle, we aim for the expectation in a quasifree state which can be computed via Wick's rule. To this end, we want to remove the $W$ term in $e^{-\beta H_{AB}}$. The technical tool to rigorously implement such a shift in the operator exponent will be a double application of the Peierls-Bogoliubov inequality \cite[(2.14)]{carlen}, which has a long history in the study of quantum many-body systems and quantum information theory. Applying it twice, we obtain
\begin{align}
\label{eq:Peierls-Bogoliubov2}
\frac{\tr (P e^K)}{\tr (e^K)} \leq \frac{\tr (P e^ {K+P})}{\tr (e^{K+P})}.
\end{align}
We use this bound with $K=-\beta H_{AB}$ and $P = \beta (W - (\mu + \gamma  - 2dJ) \N )$ so that the new effective Hamiltonian is
$$
K+P=-\beta (H_0 + (\gamma+2dJ) \N)
$$
and $e^{-\beta (H_0 + (\gamma+2dJ) \N)}$ is a trace-class quasifree state, so that expectations can be calculated via Wick's rule. Here we introduced a parameter $\gamma>0$ large enough in order to make $e^{-\beta (H_0 + (\gamma+2dJ) \N)}$ normalizable. This can be interpreted as an \textit{artificial decrease of the chemical potential} which is introduced to stabilize the system by balancing the loss of the repulsion $W$ from the exponential. (Indeed, note that  without help from $\gamma$, we would get $e^{-\beta (H_0 - (\mu+2dJ) \N)}$ which has infinite trace for $\mu > -2dJ$.) 

The final expression that we arrive at via \eqref{eq:Peierls-Bogoliubov2} can then be evaluated using Wick's theorem. Subsequently, by means of the corresponding one-particle density operator, we obtain rather explicit expressions which amount to a Riemann sum of the density of  Planck's law. This density decays for large $\beta$ in an integrable way. In particular, we see that the resulting expression as well as the error terms are bounded for $\beta \geq 1$ and Theorem \ref{th:main} follows. For the details, see \Cref{sec:proof}. 

For the discussed generalizations \eqref{eq:generalizations}, one has to bound $H_0$ by $W$  and then remove the generalized hopping term $H_0$ instead of $W$ in the exponent by means of the Peierls-Bogoliubov argument. In the end, one obtains a trace just involving number operators, which can be easily computed as well.

%

\section{Conclusions}

We presented a rigorous proof for a thermal area law for the Bose-Hubbard model and related bosonic lattice gases. This result closes a gap in the recently growing literature about the quantum information theory of lattice bosons and provides a positive-temperature counterpart to the area law for gapped bosonic ground states (in 1D) \cite{abrahamsen2022entanglement}.

 The idea of the proof is based on the general idea in \cite{wolf}  together with a Peierls-Bogoliubov argument which artificially decreases the chemical potential in order to get a trace-class free reference state. The method is highly robust and extends to many other bosonic lattice gases and any lattice  dimension.


Natural follow-up problems include the approximability of bosonic thermal states by generalized matrix product operators in the spirit of \cite{mpo_arealaw_link}. This is related to area laws for the generalized Rényi entropy \cite{verstraete2006matrix}, so it would be interesting to extend the present results to some Rényi generalizations of the mutual information (as in \cite{renyi_positivetemperature}), see also \cite{gottesman2010entanglement}. Furthermore, in light of recent progress of Lieb-Robinson bounds for bosons \cite{schuch2011information,wang2020tightening,kuwahara2021lieb,faupin2022lieb,faupin2022maximal,yin2022finite,kuwahara2022optimal} one could explore the applicability of such bounds and their imaginary time counterparts in the context of thermal area laws, see also \cite{rapid_mixing2,improved} for connections between imaginary time Lieb-Robinson bounds and  thermal area laws.

On the one hand, our result is of fundamental nature in the quantum information theory of lattice systems. On the other hand, it paves the way for  future information-theoretic studies of lattice bosons such as the ones described above. The goal is to unlock the full potential of these experimentally finely tunable systems for modern applications such as quantum machine learning \cite{amin2018quantum,anshu2021sample} and semidefinite programming solvers \cite{brandao2017quantum,van2020quantum}.

\section*{Acknowledgments}
The authors are grateful to {\'A}lvaro Alhambra and {\'A}ngela Capel Cuevas for useful comments on a draft version of the manuscript.

\newcommand{\mt}{\widetilde \mu}

\printbibliography

\widetext
 
%
%

\appendix

 \section{Preliminaries}
\subsection{Operator domains}
\label{sec:operator domains}
For the general setup, we require the following statements about operator domains. We assume that $H_A$ and $H_B$ are densely defined symmetric operators on $\Def(A)$ and $\Def(B)$, respectively and the boundary Hamiltonian is defined on $H_\partial$ on $\Def(A) \otimes \Def(B)$. We assume that $H_{AB}$ is essentially self-adjoint on $\Def(A) \otimes \Def(B)$ which is then the appropriate domain for the equality
\begin{align}
H_{AB} = H_A \otimes \Id  + \Id \otimes H_B + H_\partial
\end{align}

Concerning the Bose-Hubbard model, in the total Fock space $\FF(\ell^2(\Lambda_L))$, we consider the dense domain
$$
\FF_\fin( \ell^2(\Lambda_L) ) = \{ \psi \in \FF( \ell^2(\Lambda_L) ) : \exists n_0 \in \IN : \forall n \geq n_0 : \psi_n = 0  \}.$$
The operator $H_{AB}$ is self-adjoint on the largest domain $\Def(H_{AB})$, where it can be defined, cf.\ \cite{faupin2022maximal}. It then follows from $[H_{AB},\N] = 0$ that $H_{AB}$ is indeed essentially self-adjoint on $\FF_\fin( \ell^2(\Lambda_L) )$. Moreover, for subsystems $X \in \{A,B\}$, we always set $\Def(X) = \FF_\fin(\H_X)$. The above properties can then be verified.


\subsection{Trace inequalities in infinite dimensions}

\begin{lemma}[Peierls-Bogoliubov inequality]
\label{th:Peierls-Bogoliubov}
Let $\N \geq 0$ be self-adjoint operator with purely discrete spectrum, let $\Pi_N := \ind_{\N \leq N}$, $N \in \IN$ and assume that $\Pi_N \H$ is finite-dimensional for all $N \in \IN$. 
Let $(\Def(K),K)$ be a self-adjoint and $(\Def(P),P)$ be a symmetric operator such that $[K,\N] = 0$, $[P,\N] = 0$, $K+P$ is self-adjoint, and $e^K$, $P e^K$, $e^{K+P}$ are all trace-class.
Then we have
\begin{align}
\label{eq:Peierls-Bogoliubov1}
\frac{\tr (P e^K)}{\tr (e^K)} \leq \log \left( \frac{\tr( e^{K+P} )}{\tr(e^K) } \right),
\end{align}
In particular, if $P e^ {K+P}$ is trace-class as well,  we obtain \eqref{eq:Peierls-Bogoliubov2}, i.e.,
\begin{align}
\frac{\tr (P e^K)}{\tr (e^K)} \leq \frac{\tr (P e^ {K+P})}{\tr (e^{K+P})}.
\end{align}
\end{lemma}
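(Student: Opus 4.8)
The plan is to reduce the infinite-dimensional statement to the classical finite-dimensional Peierls--Bogoliubov inequality by compressing to the spectral subspaces $\Pi_N\H$ and then passing to the limit $N\to\infty$. The commutation relations $[K,\N]=0$ and $[P,\N]=0$ are exactly what make this reduction clean: since $K$, $P$, and $K+P$ all commute with $\N$ and hence with every spectral projection $\Pi_N$, they are block-diagonal with respect to the eigenspace decomposition of $\N$, and each $\Pi_N\H$ is an invariant finite-dimensional subspace by hypothesis. Writing $K_N=\Pi_N K\Pi_N$ and $P_N=\Pi_N P\Pi_N$ for the (finite-dimensional, self-adjoint) compressions, the commutation with $\Pi_N$ supplies the two consistency facts I need: $(K+P)|_{\Pi_N\H}=K_N+P_N$, and $e^{K}|_{\Pi_N\H}=e^{K_N}$, $e^{K+P}|_{\Pi_N\H}=e^{K_N+P_N}$.

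On the finite-dimensional space $\Pi_N\H$ I would invoke the standard inequality, which I would prove via convexity of $g(t):=\log\tr(e^{K_N+tP_N})$. The slickest route is the Gibbs variational principle $\log\tr e^{A}=\sup_{\rho}\bigl(\tr(\rho A)+S(\rho)\bigr)$ over density matrices on $\Pi_N\H$, which exhibits $g$ as a supremum of functions affine in $t$ and hence convex. Convexity gives $g(1)-g(0)\ge g'(0)$; computing $g'(0)=\tr(P_N e^{K_N})/\tr(e^{K_N})$ by Duhamel's formula together with cyclicity of the trace, and $g(1)-g(0)=\log\bigl(\tr(e^{K_N+P_N})/\tr(e^{K_N})\bigr)$, yields \eqref{eq:Peierls-Bogoliubov1} at level $N$.

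It remains to take $N\to\infty$. Since $e^{K}\ge0$ and the $\Pi_N$ increase to $\Id$, monotone convergence gives $\tr(e^{K_N})=\tr(\Pi_N e^K)\uparrow\tr(e^K)$, and likewise $\tr(e^{K_N+P_N})\uparrow\tr(e^{K+P})$; continuity of $\log$ then handles the right-hand side. For the left-hand side I would use that $Pe^K$ is trace class (so that its trace exists as assumed) together with $\Pi_N\to\Id$ strongly and $\nn{\Pi_N}\le1$, which forces $\Pi_N(Pe^K)\to Pe^K$ in trace norm and hence $\tr(P_N e^{K_N})=\tr(\Pi_N Pe^K)\to\tr(Pe^K)$. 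Taking limits in the $N$-level inequality yields \eqref{eq:Peierls-Bogoliubov1}. Finally, \eqref{eq:Peierls-Bogoliubov2} follows by applying \eqref{eq:Peierls-Bogoliubov1} a second time with $K$ replaced by $K+P$ and $P$ by $-P$ (whose hypotheses hold since $K+P$ is self-adjoint and $(K+P)-P=K$), giving $\log(\tr e^{K+P}/\tr e^{K})\le \tr(Pe^{K+P})/\tr(e^{K+P})$; chaining this with \eqref{eq:Peierls-Bogoliubov1} produces the two-sided bound.

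I expect the main obstacle to be the limit of the non-positive term $\tr(\Pi_N Pe^K)$: unlike the exponential traces, $Pe^K$ is not positive, so monotone convergence is unavailable and one must genuinely use its trace-class property (which encodes the standing assumption that the relevant traces exist) to justify the trace-norm convergence. A secondary, mostly bookkeeping point is to verify that the compressions are honest self-adjoint operators on $\Pi_N\H$ and that the exponentials commute with $\Pi_N$, i.e.\ that the block-diagonal picture is rigorous even though $P$ is only symmetric and possibly unbounded; this is precisely where the hypotheses $[K,\N]=[P,\N]=0$ and the finite-dimensionality of $\Pi_N\H$ do all the work.
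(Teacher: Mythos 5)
Your proposal is correct and follows essentially the same route as the paper: reduce to the finite-dimensional blocks $\Pi_N\H$ using $[K,\N]=[P,\N]=0$, invoke the classical finite-dimensional Peierls--Bogoliubov inequality there, pass to the limit $N\to\infty$, and obtain \eqref{eq:Peierls-Bogoliubov2} by applying \eqref{eq:Peierls-Bogoliubov1} a second time with $K+P$ and $-P$. You merely add more detail than the paper on the finite-dimensional inequality (proved via convexity/Gibbs variational principle rather than cited) and on the convergence of the truncated traces.
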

\begin{proof}[Proof of \cref{th:Peierls-Bogoliubov}]
The inequality \eqref{eq:Peierls-Bogoliubov1} is well-known for matrices, see for example \cite[(2.14)]{carlen} (or more generally if $P$ is bounded \cite{ruskai1972inequalities}). Therefore, we have 
\begin{align*}
\frac{\tr (P e^K \Pi_N)}{\tr (e^K \Pi_N)} \leq \log \left( \frac{\tr( e^{K+P} \Pi_N )}{\tr(e^K \Pi_N) } \right),
\end{align*}
where $\Pi_N = \ind_{\N \leq N}$. Taking the limit $N \to \infty$ yields the desired result. Finally, \eqref{eq:Peierls-Bogoliubov2} follows from \eqref{eq:Peierls-Bogoliubov1} by means of
\[
 \log \left( \frac{\tr( e^{K+P} )}{\tr(e^K) } \right) = -  \log \left( \frac{\tr(e^K)}{  \tr( e^{K+P} )} \right)  \leq \frac{\tr (P e^ {K+P})}{\tr (e^{K+P})}. 
\] This proves \cref{th:Peierls-Bogoliubov}.
\end{proof}
\begin{prop}[Gibbs variational principle]
\label{th:passivity}
Let $(H,\Def(H))$ be a self-adjoint operator such that $\tr e^{-\beta H} < \infty$ for all $\beta > 0$. Let  $\Fc^\beta ( \rho)  :=     \left( \tr( H \rho) -  \frac{S(\rho)}{\beta} \right)$. Then
\[
\inf_{\rho \in \Sp{\H}} \Fc^\beta(\rho) = \Fc^\beta(\rho_H),
\]
where $\rho_H = e^{-\beta H} / \tr e^{-\beta H}$. 
\end{prop}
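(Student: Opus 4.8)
The plan is to reduce the variational statement to the nonnegativity of the quantum relative entropy. Since $\tr e^{-\beta H} < \infty$, the operator $H$ is bounded below with purely discrete spectrum accumulating only at $+\infty$, so $Z := \tr e^{-\beta H} \in (0,\infty)$ and $\rho_H = e^{-\beta H}/Z$ is a genuine, faithful density matrix with $\ln \rho_H = -\beta H - (\ln Z)\,\Id$. A direct computation using this identity gives $S(\rho_H) = \beta \tr(H \rho_H) + \ln Z$ and hence $\Fc^\beta(\rho_H) = -\tfrac{1}{\beta}\ln Z$; this is the candidate value of the infimum that I would aim to match from below.

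The core step is the identity, valid for any state $\rho$ with $\tr(H\rho) < \infty$ and $S(\rho) < \infty$,
\[
\beta\, \Fc^\beta(\rho) + \ln Z \;=\; \tr(\rho \ln \rho) - \tr(\rho \ln \rho_H) \;=\; S(\rho \,\|\, \rho_H),
\]
where $S(\rho\,\|\,\rho_H)$ denotes the relative entropy. Indeed, inserting $\ln\rho_H = -\beta H - (\ln Z)\,\Id$ yields $-\tr(\rho\ln\rho_H) = \beta\tr(H\rho) + \ln Z$, and combining with $\tr(\rho\ln\rho) = -S(\rho)$ gives the claim. Invoking $S(\rho\,\|\,\rho_H) \geq 0$ (Klein's inequality), which holds for faithful reference states also in infinite dimensions as a consequence of the data-processing inequality for trace-class operators \cite{lindblad}, I obtain $\Fc^\beta(\rho) \geq -\tfrac{1}{\beta}\ln Z = \Fc^\beta(\rho_H)$, with equality exactly when $\rho = \rho_H$. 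This settles the assertion on all states of finite energy and finite entropy.

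The main obstacle is that in infinite dimensions $\Fc^\beta(\rho)$ can be an indeterminate $\infty-\infty$ expression: the energy $\tr(H\rho)$ is well-defined in $(-\infty,+\infty]$ because $H$ is bounded below, but it may be $+\infty$, and likewise $S(\rho)\in[0,+\infty]$. It therefore remains to rule out a smaller value of $\Fc^\beta$ arising from states with $\tr(H\rho) = +\infty$; equivalently, to establish that $S(\rho) \leq \beta\tr(H\rho) + \ln Z$ for \emph{all} $\rho$, so that the energy term always dominates the entropy when it diverges. For this I would use the finite-rank spectral projections $\Pi_N = \ind_{H \leq N}$, which have finite rank by the trace-class assumption. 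Restricting to $\Pi_N \H$ reduces the problem to the elementary finite-dimensional Gibbs variational principle, where the identity and Klein's inequality are classical; passing to the limit $N \to \infty$ and using that $\tr(H\rho) = \lim_N \tr(\Pi_N H \Pi_N\,\rho)$ (monotone convergence, as $H$ is bounded below) together with the lower semicontinuity of the relative entropy shows $\Fc^\beta(\rho) = +\infty$ whenever $\tr(H\rho)=+\infty$, so the bound $\Fc^\beta(\rho)\geq \Fc^\beta(\rho_H)$ persists. This truncation is exactly analogous to the one already used in the proof of \cref{th:Peierls-Bogoliubov}.
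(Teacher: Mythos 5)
Your proposal is correct and follows essentially the same route as the paper: rewrite $\beta\,\Fc^\beta(\rho)+\ln\tr e^{-\beta H}$ as the relative entropy $\tr(\rho\ln\rho-\rho\ln\rho_H)$ and invoke its nonnegativity (the paper cites the trace inequality $\tr(K\ln K-K\ln P)\geq\tr(K-P)$ from Ruelle, you cite Klein/Lindblad — same substance). Your additional truncation argument handling the $\infty-\infty$ case is extra rigor the paper omits, but it does not change the approach.
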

\begin{proof}[Proof of \cref{th:passivity}]
We have
\[
\tr(K \ln K - K \ln P) \geq \tr(K-P)
\]
for all positive self-adjoint trace class operators $K,P$ \cite[Prop. 2.5.3]{ruelle}. This yields \cite[Section 5.3.1]{bratellirobinsion2}
\begin{align*}
\Fc^\beta(\rho) &= \beta^{-1} \tr( \rho \ln \rho  - \rho \ln \rho_H) - \beta^{-1}\ln \tr e^{-\beta H} \geq -\beta^{-1}\ln \tr e^{-\beta H} =  \Fc^\beta(\rho_H). \qedhere
\end{align*}
\end{proof}

%
%
%
%


\subsection{Truncated correlations bound}
\label{sec:truncated correlations}

Let $M_A,M_B$ be two bounded self-adjoint operators on $\H_A$ and $\H_B$, respectively. We denote their truncated correlation function as
\[
\mathcal{C}(A,B) = \tr( M_A \otimes M_B \rho_{AB}) - \tr( M_A \rho_A ) \tr( M_B \rho_B ).
\]
For all $\beta,U,\mu > 0$, we have
\begin{align}
	\label{eq:correlations bound}
	\frac{\mathcal{C}(A,B)^2}{2\nn{M_A}^2 \nn{M_B}^2} \leq 2c(J,U,\mu) \max\{1,\beta\} \abs{\ABb},
\end{align}
with the same constant as in \cref{th:main}. This follows from  the quantum Pinsker inequality $I(A:B) \geq \frac{1}{2} \nn{ \rho_{AB} - \rho_A \otimes \rho_B }_1^2$ and $\nn{X}_1 \geq \tr(XY) / \nn{Y}$. We mention that the standard proof of the quantum Pinkser inequality, cf.\ \cite[Theorem 1.15]{entropy_book} extends to infinite dimensions since the data processing inequality holds for trace-class operators \cite{lindblad}.

 Note however that \eqref{eq:correlations bound} is trivial unless $\abs{\ABb}$ stays bounded in the infinite-volume limit, e.g. if $A$ or $B$ are kept at fixed size, as its left-hand side is always bounded by one. Therefore, in order to find more regimes where this is useful, it is an interesting open question if our main estimate in \cref{th:main} can be improved for $\beta \to 0$, as it is the case for spin systems \cite{wolf,improved}.
%

 \section{Proof of Theorem \ref{th:main}}
 \label{sec:proof}
 To begin, we notice that we may assume without loss of generality that $J>0$. Indeed, if $J=0$, the Gibbs state is a product state (Mott insulator) and the claim is trivial and if $J<0$ we can employ the unitary transformation $a_x\to -a_x$ at every second lattice site to reduce to the case $J>0$.
 
 \subsection{Step 1: Controlling boundary energy by particle number}
In this section, we prove the following bound
\begin{prop}\label{prop:step1}
	\label{th:final IAB bound}
	We have
	\[
	I(A:B)  \leq \frac{4 d \abs{\ABb}}{L^d} \beta J  \tr ( \N \rho_{AB} ) .
	\]
\end{prop}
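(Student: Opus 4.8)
The plan is to feed the boundary Hamiltonian into \cref{th:lem1} and then dominate each of its hopping terms by number operators, using the translation invariance of $\rho_{AB}$ to turn the particle number on each bond into a $1/L$ fraction of the total. Concretely, \cref{th:lem1} gives
\[
I(A:B) \leq \beta\, \tr\bigl(H_\partial(\rho_A\otimes\rho_B - \rho_{AB})\bigr) \leq \beta\bigl(|\tr(H_\partial\,\rho_A\otimes\rho_B)| + |\tr(H_\partial\,\rho_{AB})|\bigr),
\]
so it suffices to bound each of the two expectations by $\tfrac{4J}{L}\tr(\N\rho_{AB})$.

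The key pointwise estimate is the operator inequality
\[
\pm(a_x^\dagger a_y + a_y^\dagger a_x) \leq n_x + n_y,
\]
which follows at once from $(a_x \mp a_y)^\dagger(a_x \mp a_y) \geq 0$. Hence, for any density matrix $\sigma$ on $\H_{AB}$ with finite particle-number expectation and any nearest-neighbor bond $x \sim y$, the self-adjointness of the hopping term gives $|\tr((a_x^\dagger a_y + a_y^\dagger a_x)\sigma)| \leq \tr((n_x + n_y)\sigma)$. Recalling $H_\partial = -J\sum_{x\in A,\,y\in B,\,x\sim y}(a_x^\dagger a_y + a_y^\dagger a_x)$, I would sum over the boundary bonds to get $|\tr(H_\partial \sigma)| \leq J\sum_{\text{bonds}}\tr((n_x+n_y)\sigma)$.

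Next I would exploit translation invariance. Since $H_{AB}$ in \eqref{eq:BH Hamiltonian} commutes with the lattice translations of the periodic box, so does $\rho_{AB}$, whence $\tr(n_x\rho_{AB}) = \tfrac{1}{L^d}\tr(\N\rho_{AB})$ for every site $x$. As $A$ is a slab of width $L_A$, the periodic geometry creates two interfaces, each carrying $L^{d-1}$ bonds, so $H_\partial$ has $2L^{d-1}$ bonds and the double sum over endpoints ranges over $4L^{d-1}$ site-appearances. For $\sigma=\rho_{AB}$ this yields $|\tr(H_\partial\rho_{AB})| \leq \tfrac{4J}{L}\tr(\N\rho_{AB})$. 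For $\sigma=\rho_A\otimes\rho_B$ the expectation of $n_x+n_y$ factorizes and, by consistency of the partial trace, $\tr_A(n_x\rho_A)=\tr(n_x\rho_{AB})$ and $\tr_B(n_y\rho_B)=\tr(n_y\rho_{AB})$, so the same translation-invariance count applies and gives $|\tr(H_\partial\rho_A\otimes\rho_B)| \leq \tfrac{4J}{L}\tr(\N\rho_{AB})$. Adding the two contributions and multiplying by $\beta$ produces exactly $\tfrac{8}{L}\beta J\tr(\N\rho_{AB})$. (I note in passing that the first expectation in fact vanishes by the global $U(1)$ gauge symmetry of $\rho_{AB}$, since $\tr(a_x^\dagger\rho_{AB})=0$; this would sharpen the constant to $\tfrac{4}{L}$, but the crude split already suffices for the stated bound.)

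The only real obstacle is the unboundedness: every manipulation involves expectations of unbounded operators, so I must ensure each trace is finite and that the operator inequalities survive upon taking expectations. This requires knowing a priori that $\tr(\N\rho_{AB})<\infty$ and that $n_x,n_y$ are controlled in the states $\rho_{AB}$ and $\rho_A\otimes\rho_B$; these finiteness facts rest on the super-polynomial decay of the Gibbs weights in the particle number (a consequence of $U>0$), justified through the domain and trace-class discussion recorded earlier. Granting this, the estimate is elementary, and the $1/L$ gain is precisely the area-law scaling once the subsequent steps establish $\tr(\N\rho_{AB})=O(L^d)$.
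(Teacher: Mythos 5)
Your proposal is correct and follows essentially the same route as the paper: Lemma \ref{th:lem1}, the operator inequality $\pm(a_x^\dagger a_y + a_y^\dagger a_x) \leq n_x + n_y$, consistency of the partial traces to replace expectations in $\rho_A\otimes\rho_B$ by those in $\rho_{AB}$, and the translation-invariance count over the $2L^{d-1}$ boundary bonds. The only cosmetic difference is that you split the two expectations by the triangle inequality (and note the $U(1)$ vanishing of the product-state term) where the paper bounds the difference directly by the sum $\rho_A\otimes\rho_B+\rho_{AB}$; the resulting constant $8\beta J/L$ is identical.
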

\begin{proof}[Proof of Proposition \ref{prop:step1}]
		Using \cref{th:lem1} and the operator Cauchy-Schwarz inequality $\pm (a_x^\dagger a_y + a_y^\dagger a_x) \leq n_x + n_y$ we get 
	\begin{align*}
	I(A:B) \leq \beta J \sum_{\substack{x \sim y, \\ x \in A, y \in B}}  \tr( (n_x+n_y) (\rho_A \otimes \rho_B + \rho_{AB} )) .
	\end{align*}
	Since, for $x\in A$, $ \tr( n_x \rho_A \otimes \rho_B ) = \tr( n_x \rho_A) = \tr( n_x \rho_{AB})$ and similarly for $n_y$ and $\rho_B$, we obtain
		\begin{align}
	I&(A:B) \leq  2 \beta J \sum_{\substack{x \sim y, \\ x \in A, y \in B}} \tr( (n_x+n_y)\rho_{AB} ).  	\label{eq:lem2}
	\end{align}
	Observe that $H_{AB}$ is translation-invariant, i.e., for all $x \in \Lambda_L$, \[\T_x^* H_{AB} \T_x = H_{AB},\] where $\T_x$ denotes the unitary translation operator by $x$,  $(\T_x \psi)(y) = \psi(y+x \mod L)$. This implies $\T_x^* \rho_{AB} \T_x = \rho_{AB}$ and therefore, $\tr (n_x \rho_{AB}) = \tr ( n_y \rho_{AB})$ for all $x,y$. The summation in \eqref{eq:lem2} is over $\abs{\ABb}$ many terms, so
	\begin{align*}
	\sum_{x \sim y, ~x \in A, y \in B}  \tr( (n_x+n_y)\rho_{AB} )  \leq   2 d \abs{\ABb}  \tr( n_{x_0} \rho_{AB}) 
	=  2d  \abs{\ABb}  \frac{\tr( \N \rho_{AB} )}{L^d} ,
	\end{align*}
	where $x_0 \in \Lambda$ is some arbitrary element. This proves \cref{prop:step1}.
\end{proof}

\subsection{Step 2: Removing the interaction from the exponential}
In the following we write $\expec{P}{K} := \tr(P e^{-K}) / \tr(e^{-K})$ for operators $K$ and $P$ and we will also drop the subscript $AB$ and write $H = H_{AB}$.

\begin{prop}\label{prop:step2}
\label{th:N estimate}
Let $C_0 = \frac{1}{4} \frac{U}{\gamma + \mu }$. For all $\gamma >0$, we have
\begin{align*}
\expec{\N}{\beta H} \leq 2  \bigg(  \frac{C_0}{U/2}   \expec{W - (\mu + \gamma - 2dJ) \N } {\beta (H_0 +\gamma \N)}   +  \frac{C_0}{4} (1+ C_0^{-1})^2 L^d \bigg).
\end{align*}
\end{prop}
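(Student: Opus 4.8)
The plan is to bound the unbounded quantity $\expec{\N}{\beta H}$ using two ingredients: an \emph{operator inequality} that trades the number operator for the on-site repulsion $W$ up to a volume term, together with the \emph{Peierls--Bogoliubov transfer} \eqref{eq:Peierls-Bogoliubov2} from \cref{th:Peierls-Bogoliubov}, which replaces expectations in the intractable interacting Gibbs state $\expec{\cdot}{\beta H}$ by expectations in the quasifree reference state $\expec{\cdot}{\beta(H_0+\gamma\N)}$ where Wick's rule applies. The value $C_0=\frac14\frac{U}{\gamma+\mu}$ will turn out to be exactly the coefficient that lets the resulting self-consistent estimate close with overall constant $2$.

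First I would record the scalar inequality
\[
t \le C_0\,t(t-1) + \frac{(1+C_0)^2}{4C_0}, \qquad t \ge 0,
\]
which holds for all real $t$ because the difference equals $C_0\big(t-\tfrac{C_0+1}{2C_0}\big)^2\ge 0$ (the quadratic $C_0t^2-(C_0+1)t+\tfrac{(1+C_0)^2}{4C_0}$ has vanishing discriminant). Applying this through the functional calculus to each $n_x$, whose spectrum is contained in $\{0,1,2,\dots\}$, and summing over the $L^d$ sites, while using $W=\frac{U}{2}\sum_x n_x(n_x-1)$ and $\frac{(1+C_0)^2}{4C_0}=\frac{C_0}{4}(1+C_0^{-1})^2$, yields the operator inequality
\[
\N \le \frac{C_0}{U/2}\,W + \frac{C_0}{4}(1+C_0^{-1})^2\,L^d .
\]
Taking the expectation in the full Gibbs state and abbreviating $\delta:=\mu+\gamma-2dJ$, I would then split $W=(W-\delta\N)+\delta\N$ to obtain
\[
\expec{\N}{\beta H} \le \frac{C_0}{U/2}\Big(\expec{W-\delta\N}{\beta H}+\delta\,\expec{\N}{\beta H}\Big) + \frac{C_0}{4}(1+C_0^{-1})^2 L^d .
\]

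Next I would invoke \eqref{eq:Peierls-Bogoliubov2} with $K=-\beta H$ and $P=\beta(W-\delta\N)$, for which $e^{K+P}\propto e^{-\beta(H_0+\gamma\N)}$ is the quasifree reference operator; this transfers the first expectation as $\expec{W-\delta\N}{\beta H}\le\expec{W-\delta\N}{\beta(H_0+\gamma\N)}$. Substituting and collecting the $\expec{\N}{\beta H}$ terms gives the self-consistent bound
\[
\Big(1-\tfrac{C_0}{U/2}\delta\Big)\expec{\N}{\beta H} \le \frac{C_0}{U/2}\,\expec{W-\delta\N}{\beta(H_0+\gamma\N)} + \frac{C_0}{4}(1+C_0^{-1})^2 L^d .
\]
Because $\frac{C_0}{U/2}\delta=\frac{\delta}{2(\gamma+\mu)}$ and $\delta=\mu+\gamma-2dJ\le\mu+\gamma$ (here $J>0$ after the reduction at the start of the proof), the prefactor satisfies $1-\frac{C_0}{U/2}\delta\ge\frac12>0$. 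Dividing by it and noting that $\expec{\N}{\beta H}\ge 0$ forces the right-hand side to be nonnegative, the factor $(1-\frac{C_0}{U/2}\delta)^{-1}\le 2$ produces exactly the constant $2$ of \cref{prop:step2}.

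The algebra closes cleanly thanks to the tailored $C_0$, so the real work is the rigor on the infinite-dimensional Fock space: justifying the functional calculus and the operator inequality on the domain $\FF_\fin$, the finiteness of $\expec{\N}{\beta H}$ and $\expec{W}{\beta H}$ so that expectations of the operator inequality are legitimate, and---most importantly---the Peierls--Bogoliubov transfer. The latter is where $\gamma>0$ is essential: it acts as an artificially decreased chemical potential compensating for the removal of the stabilizing repulsion $W$ from the exponential, which is exactly what makes $e^{-\beta(H_0+\gamma\N)}$ trace class and hence renders the cutoff argument of \cref{th:Peierls-Bogoliubov} (via $\Pi_N=\ind_{\N\le N}$) applicable. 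I expect this trace-class and domain bookkeeping, rather than the algebra, to be the main obstacle.
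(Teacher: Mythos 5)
Your proposal is correct and follows essentially the same route as the paper: the scalar inequality $n\le C_0 n(n-1)+\tfrac{C_0}{4}(1+C_0^{-1})^2$, a self-consistent bound on $\expec{\N}{\beta H}$, and the double Peierls--Bogoliubov transfer to the quasifree state. The only (harmless) deviation is bookkeeping: you absorb $\delta=\mu+\gamma-2dJ$ directly and bound the prefactor $1-\tfrac{C_0}{U/2}\delta\ge\tfrac12$ using $J>0$, whereas the paper first subtracts $(\mu+\gamma)\N$ so the prefactor is exactly $\tfrac12$ and then uses $-(\mu+\gamma)\N\le-(\mu+\gamma-2dJ)\N$; both yield the stated constant $2$.
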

\begin{proof}[Proof of Proposition \ref{prop:step2}]
	Let $C > 0$.
We have $n \leq  C n(n-1) + \frac{C}{4} (1+ C^{-1})^2$ for all $n \in \IN$. Thus, we obtain
\begin{align*}
\expec{\N}{\beta H} = \expec{\sum_{x \in \Lambda_L} n_x }{\beta H} 
\leq  \sum_{x \in \Lambda_L} \expec{  C n_x(n_x-1) +  \frac{C}{4} (1+ C^{-1})^2}{\beta H} =  \frac{C}{U/2} \expec{W}{\beta H}+  \frac{C}{4} (1+ C^{-1})^2 L^d.
\end{align*}
This leads to
\begin{align*}
\expec{\N}{\beta H}  \left(1 - \frac{C}{U/2} (\mu + \gamma ) \right) &\leq \frac{C}{U/2} \expec{W - (\mu + \gamma  ) \N}{\beta H} +  \frac{C}{4} (1+ C^{-1})^2 L^d
\\ &\leq \frac{C}{U/2} \expec{W - (\mu + \gamma - 2dJ  ) \N}{\beta H} +  \frac{C}{4} (1+ C^{-1})^2 L^d.
\end{align*}
We now apply twice the Peierls-Bogoliubov inequality, i.e., \eqref{eq:Peierls-Bogoliubov2} in \cref{th:Peierls-Bogoliubov} with $P = \beta (W - (\mu + \gamma  - 2dJ) \N )$ and $K = -\beta H$, and get 
\begin{align*}
\expec{W - (\mu + \gamma - 2dJ) \N}{\beta H}  \leq  \expec{W - (\mu + \gamma - 2dJ) \N}{\beta (H_0 +\gamma \N)} .
\end{align*}
Using this in the previous bound and setting $C = C_0$ yields the desired estimate. This proves Proposition \ref{prop:step2}.
\end{proof}

\subsection{Step 3: Calculation for quasi-free states}
The estimate of the on-site interaction in the free reference states leads to an estimate of the particle number on a specific site via the one-particle density matrix. Here we get a Riemann sum of the density function of Planck's law. 

We set
\[
 f(\gamma, \beta, J)  := \int_{[0,\frac{1}{2}]^d}  \left(e^{4 J\beta \sum_{j=1}^d  \sin^2 ( \pi x_j ) + \beta \gamma } - 1 \right)^{-1} \d x,
\]
and denote the error terms by
\begin{align*}
\epsilon_1 &=  \frac{(L+1)^d}{L^d} -1, \\
\epsilon_2 &= \frac{(e^{\beta \gamma} -1)^{-1}}{L^d} \left( (L+2)^d - (L+1)^d  \right) .
\end{align*}

We shall use the following two lemmas.
\begin{lemma}
\label{th:a_xdagger a_x estimate}
For all $\gamma >0$ and $x \in \Lambda_L$,
\begin{align*}
\expec{a_x^\dagger a_x}{\beta (H_0 + \gamma \N)} \leq  & 2^d (1 + \epsilon_1)   f(\gamma, \beta, J)  + \epsilon_2.
\end{align*}
%
\end{lemma}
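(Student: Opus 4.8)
The plan is to exploit that the reference Hamiltonian $H_0+\gamma\N$ is bilinear in the creation and annihilation operators, so its Gibbs state is quasi-free and the one-site occupation can be computed exactly by Wick's rule. First I would diagonalize the associated one-particle Hamiltonian. Writing $H_0=-J\sum_{x\sim y}a_x^\dagger a_y+2dJ\N$, the operator $H_0+\gamma\N$ has one-particle matrix $h=J\,L_{\mathrm{graph}}+\gamma\,\Id$, where $L_{\mathrm{graph}}$ is the graph Laplacian of the periodic box $\Lambda_L$. In the Fourier basis with momenta $k_j=2\pi n_j/L$, $n_j\in\{0,\ldots,L-1\}$, this is diagonal with eigenvalues
\[
E(k)=4J\sum_{j=1}^{d}\sin^2(k_j/2)+\gamma,
\]
which are strictly positive for $\gamma>0$; this positivity is exactly what makes $e^{-\beta(H_0+\gamma\N)}$ trace-class.

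Since the state is quasi-free, Wick's rule reduces everything to the two-point function, and for a single bosonic mode of energy $E(k)$ the mean occupation is the Bose--Einstein value $(e^{\beta E(k)}-1)^{-1}$. Transforming back to position space and using translation invariance (so that $\expec{a_x^\dagger a_x}{\beta(H_0+\gamma\N)}$ does not depend on $x$), I obtain the exact identity
\[
\expec{a_x^\dagger a_x}{\beta(H_0+\gamma\N)}=\frac{1}{L^d}\sum_{n\in\{0,\ldots,L-1\}^d}\left(\exp\left(\beta\Bigl(4J\sum_{j=1}^{d}\sin^2(\pi n_j/L)+\gamma\Bigr)\right)-1\right)^{-1}.
\]
Using $\sin^2(k_j/2)=\sin^2(\pi n_j/L)$, the right-hand side is precisely the Riemann sum with spacing $1/L$ of the integrand $\psi(x)=(\exp(\beta(4J\sum_{j=1}^d\sin^2(\pi x_j)+\gamma))-1)^{-1}$ whose integral over $[0,\tfrac12]^d$ defines $f(\gamma,\beta,J)$.

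It remains to dominate this Riemann sum by $f$ with explicit corrections, and this is the technical heart of the lemma. The integrand $\psi$ is periodic, symmetric under $x_j\mapsto1-x_j$ in each coordinate, and coordinatewise decreasing on $[0,\tfrac12]^d$ because $\sin^2(\pi x_j)$ increases there. I would first use the reflection symmetry in each of the $d$ coordinates to fold the torus sum onto the fundamental domain $[0,\tfrac12]^d$, which is where the factor $2^d$ and the integration region of $f$ come from. On the interior of this domain the coordinatewise monotonicity allows the elementary cell estimate $\psi(n/L)\le L^d\int_{\mathrm{cell}}\psi$, so that the bulk of the sum is bounded by the integral up to the grid-discretization factor $(1+\epsilon_1)=(L+1)^d/L^d$ that accounts for the mismatch between the finite grid and the continuum domain.

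The hard part will be the boundary faces of the fundamental domain, namely the grid points where some $n_j=0$. There the monotone comparison breaks down and, more importantly, the Bose--Einstein factor is largest, since it diverges as the mode energy tends to zero and is regularized here only by $\gamma$. These singular terms must be separated out and bounded by their maximal value $(e^{\beta\gamma}-1)^{-1}$, attained at the zero mode $n=0$, times a count of the boundary grid points; after division by $L^d$ this yields exactly the combinatorial error term $\epsilon_2$. Collecting the $2^d(1+\epsilon_1)f$ bulk contribution with the $\epsilon_2$ boundary contribution gives the claimed bound. I expect the careful bookkeeping of these boundary faces---counting them correctly and checking that their total weight is of the stated order---to be the main obstacle, the rest being the standard Gaussian-state computation and monotone Riemann-sum comparison.
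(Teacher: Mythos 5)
Your proposal is correct in substance and follows the same overall strategy as the paper (quasi-free state $\Rightarrow$ Bose--Einstein occupation of the Laplacian modes, split off the near-zero modes, compare the bulk sum to $f$ by monotone Riemann-sum estimation), but your implementation of the mode sum is genuinely different and arguably cleaner. The paper expands $\langle\delta_{x_0},(e^{\beta(-J\Delta+\gamma)}-\Id)^{-1}\delta_{x_0}\rangle$ in the \emph{real} eigenbasis of the periodic Laplacian, so the weights $\prod_j|v_{i_j}(1)|^2$ are non-uniform; it then pairs the sine/cosine eigenvectors to bound each weight by $2/L$ and runs an induction over the dimension $d$, which is where the factors $4^k$, the term $\epsilon_2$, and the $(1+\epsilon_1)$ correction come from. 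You instead use the plane-wave basis, where translation invariance makes every weight exactly $L^{-d}$, obtain the exact identity $L^{-d}\sum_n(e^{\beta E(k_n)}-1)^{-1}$, and fold onto $[0,\tfrac12]^d$ by the reflection $n_j\mapsto L-n_j$; since the integrand is coordinatewise decreasing and evaluated at right endpoints of the grid cells, the bulk is then bounded by $2^df$ directly, without even needing $\epsilon_1$.

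One quantitative caveat: your boundary contribution (grid points with some $n_j=0$) is at most $(L^d-(L-1)^d)L^{-d}(e^{\beta\gamma}-1)^{-1}\le \tfrac{d}{L}(e^{\beta\gamma}-1)^{-1}$, which is of order $L^{-1}$, whereas the printed $\epsilon_2$ is of order $L^{-2}$ for $d=2$; so you do not literally recover the stated constant. However, the paper's own bound on \eqref{eq:zero term} uses $\sum_{k=0}^{d-1}(4\ell)^k=\frac{(4\ell)^{d-1}-1}{4\ell-1}$, which misstates the geometric series (it should be $\frac{(4\ell)^d-1}{4\ell-1}$), and the corrected $\epsilon_2$ is again of order $(e^{\beta\gamma}-1)^{-1}/L$ --- the same order as yours. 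Since only the crude bound $\epsilon_2\lesssim C_d(e^{\beta\gamma}-1)^{-1}$ is used in the proof of \cref{th:main}, this discrepancy is immaterial, and your argument proves the lemma with an error term of the same quality.
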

 \begin{lemma}
	\label{th:f estimate}
	For all $\gamma > 0$,
	\[
	f(\gamma, \beta, J)  \leq \frac{1}{2^d} \frac{1}{ \beta \gamma}.
	\]
\end{lemma}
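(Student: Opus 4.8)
The plan is to prove Lemma \ref{th:f estimate} by directly bounding the integral
\[
f(\gamma, \beta, J) = \int_{[0,\frac{1}{2}]^d} \left( e^{4J\beta \sum_{j=1}^d \sin^2(\pi x_j) + \beta\gamma} - 1 \right)^{-1} \d x.
\]
The key observation is that since $J > 0$ (which we may assume, as noted at the start of the proof of Theorem \ref{th:main}) and $\sin^2(\pi x_j) \geq 0$ on the domain, the exponent is at least $\beta\gamma$. First I would discard the nonnegative kinetic contribution in the exponent, using the monotonicity of $t \mapsto (e^t - 1)^{-1}$, to obtain the pointwise bound
\[
\left( e^{4J\beta \sum_{j} \sin^2(\pi x_j) + \beta\gamma} - 1 \right)^{-1} \leq \left( e^{\beta\gamma} - 1 \right)^{-1}.
\]
This already yields $f \leq 2^{-d}(e^{\beta\gamma}-1)^{-1}$ since the domain $[0,\tfrac12]^d$ has volume $2^{-d}$. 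The remaining task is purely one-dimensional: to show that $(e^{\beta\gamma}-1)^{-1} \leq e^{-\alpha\beta\gamma}/((1-\alpha)\beta\gamma)$ for every $\alpha \in (0,1)$.

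To establish this scalar inequality, write $t = \beta\gamma > 0$; I want $(e^t - 1)^{-1} \leq e^{-\alpha t}/((1-\alpha)t)$, equivalently $(1-\alpha)t \, e^{\alpha t} \leq e^t - 1$. Factoring $e^{\alpha t}$ out of the right side reduces this to $(1-\alpha)t \leq e^{(1-\alpha)t} - e^{-\alpha t}$. Setting $s = (1-\alpha)t$ and $r = \alpha t$ (both nonnegative with $s + r = t$), the claim becomes $s \leq e^s - e^{-r}$. Since $e^{-r} \leq 1$ for $r \geq 0$, it suffices to show $s \leq e^s - 1$, which is the elementary convexity bound $e^s \geq 1 + s$ valid for all real $s$. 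Chaining these reductions gives the result, and combining with the volume bound from the first step completes the proof.

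I expect the main (and only genuine) subtlety to be bookkeeping in the scalar estimate: one must be careful that the Planck-type factor $(e^{\beta\gamma}-1)^{-1}$ is being compared against a quantity that degrades gracefully as $\beta\gamma \to 0$, since $(e^t-1)^{-1} \sim 1/t$ matches the claimed bound's $1/((1-\alpha)t)$ behavior in that limit, confirming the inequality is not vacuous and that the constant $1-\alpha$ in the denominator is forced. The role of this lemma in the larger argument is that, together with Lemma \ref{th:a_xdagger a_x estimate}, it supplies the exponential decay $e^{-\alpha\beta\gamma}$ in the temperature, which is what ultimately renders the particle-number expectation---and hence the mutual information---bounded uniformly for $\beta \geq 1$ after optimizing over $\gamma$.
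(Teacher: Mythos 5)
Your proof is correct and follows essentially the same route as the paper: both arguments come down to factoring out $e^{-\alpha\beta\gamma}$ and invoking the elementary bound $e^s - 1 \geq s$, with the nonnegativity of the kinetic contribution (i.e., $J>0$) used to discard it from the exponent. The only difference is cosmetic — you drop the kinetic term first and then handle a clean scalar inequality in $t=\beta\gamma$, whereas the paper carries the kinetic term through the chain of estimates — and your sanity check on the $\beta\gamma\to 0$ behavior is a nice touch.
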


We postpone the proofs of these lemmas for now and show how they imply they main result.

\begin{proof}[Proof of \cref{th:main}]
 Wick's theorem for quasi-free states \cite[p.40 and Prop. 5.2.28]{bratellirobinsion2} yields
\begin{align*}
\expec{ n_x ( n_x - 1)}{{\beta (H_0 +\gamma \N)}  } = \expec{ (a_x^\dagger a_x)^2}{\beta (H_0 +\gamma \N)} = 2 \expec{ a_x^\dagger a_x}{\beta (H_0 +\gamma \N)}^2.
\end{align*}

By \cref{th:a_xdagger a_x estimate},
\begin{align*}
\expec{ W }{\beta (H_0 +\gamma \N)}  = 2  L^d \expec{ a_x^\dagger a_x}{\beta (H_0 +\gamma \N)}^2 \leq  2^{2d +2} L^d (1+\epsilon_1)^2  f(\gamma,\beta,J)^2 + 4 L^d \epsilon_2^2.
\end{align*}
Using this and $\expec{- (\mu + \gamma - 2dJ) \N } {\beta (H_0 +\gamma \N)} \leq 0$ in the upper bound of \cref{th:N estimate}, we obtain that for all $\gamma > 2dJ - \mu$, 
\begin{align*}
& \expec{\N}{\beta H} \leq 2 L^d   \bigg( \frac{C_0}{U/2} (  2^{2d +2} (1+\epsilon_1)^2  f(\gamma,\beta,J)^2 + 4 \epsilon_2^2 ) +  \frac{C_0}{4} (1+ C_0^{-1})^2  \bigg).
\end{align*}
Next we use that $f(\gamma, \beta, J)  \leq \frac{1}{2^d} \frac{e^{- \alpha \beta \gamma}}{(1-\alpha) \beta \gamma}$ for all $\alpha \in (0,1)$, cf. \cref{th:f estimate}. The upper bound for $\expec{\N}{\beta H}$ then becomes
\begin{align*}
2 L^d  \bigg(  \frac{1}{2} \frac{1}{\gamma + \mu}\left( 4 (1 +\epsilon_1)^2 \frac{1}{ \beta^2 \gamma^2} + 4  \epsilon_2^2 \right)  + \frac{1}{16} \frac{U}{\gamma  + \mu} + \frac{\gamma  + \mu}{U}      + \frac{1}{2}  \bigg).
\end{align*}
Estimating $1 + \epsilon_1 \leq 2^d$ and $\epsilon_2 \leq 3^d (e^{\beta \gamma }  -  1 )^{-1}$ and  choosing $\gamma = \max\{1/\beta,  2dJ + 1\}$ (such that $\beta \gamma \geq 1$), we arrive at the bound
\begin{align}
\expec{\N}{\beta H} &\leq  	2 L^d  \bigg(  \frac{1}{2( 2dJ + 1 + \mu)}\bigg(4\cdot 4^d  + 4 \cdot 3^{2d} (e -  1 )^{-2}  \bigg)  + \frac{1}{16} \frac{U}{ 2dJ + 1 + \mu} + \frac{ \max\{1/\beta,  2dJ + 1\} + \mu}{U}      + \frac{1}{2}   \bigg) \nonumber \\
	&\leq \frac{L^d}{8} \max\left\lbrace \frac{1}{\beta} ,1 \right\rbrace c(J,U,\mu), \label{eq:last step}
\end{align}
with 
\begin{align}
	c(J,U,\mu) := 16 d J \bigg( \frac{ 4^d +  4 \cdot 3^{2d} (e  -  1 )^{-2}  }{2(2dJ + 1 + \mu)} + \frac{1}{16} \frac{U}{ 2dJ + 1 + \mu} + \frac{  2dJ + 1 + \mu}{U}      + \frac{1}{2}  \bigg),   \label{eq:main constant}
\end{align}
and where the last step \eqref{eq:last step} follows from distinguishing the cases $\beta \gtreqless 1$. Finally, we conclude the proof with \cref{th:final IAB bound}.
\end{proof}
\begin{remark}
Notice that the error terms $\epsilon_1$ and $\epsilon_2$ converge to zero as $L \to \infty$. It is of separate interest whether the remaining term $2^d   f(\gamma,\beta,J)$ actually captures the correct asymptotic behavior of $\expec{a_x^\dagger a_x}{\beta (H_0 + \gamma \N)}$, cf. \cref{th:a_xdagger a_x estimate}. 
\end{remark}

\subsection{Laplace eigenvalues and eigenvectors for periodic boundary conditions}
The proofs proof of Lemma \ref{th:a_xdagger a_x estimate} requires information on the spectral theory of the one-body graph Laplacian, for which introduce notation here.

Consider the discrete Laplacian $-\Delta \geq 0$ on the chain $\{0,1, \ldots, L\}$, $L \in \IN$, with periodic boundary conditions, i.e., assume that the nodes $0$ and $L$ are identified such that every vector on $\ell^2(\{0,1, \ldots, L\})$ satisfies $v(0) = v(L)$. So it suffices to use vectors $v \in \ell^2(\{1,\ldots,L\})$. 

The $L$ eigenvalues $\lambda_i$ and eigenvectors $v_i$, $i=1, \ldots, L$ of $-\Delta$ in this situation are given by
\begin{align*}
\lambda_{2k+1} &= 4 \sin^2 \left( \frac{k \pi}{L} \right), ~ k = 0, \ldots,  \left\lfloor \frac{L-1}{2} \right\rfloor,  \\
\lambda_{2k} &= 4 \sin^2 \left( \frac{k \pi}{L} \right) , ~  k = 1, \ldots,  \left\lfloor \frac{L}{2} \right\rfloor,
\end{align*}
and
\begin{align*}
v_1(i) &= L^{-1/2}, \\
v_L(i) &= L^{-1/2} (-1)^i \text{ if } L \text{ even}, \\
v_{2k + 1}(i) &= \sqrt{ 2 / L} \cos \left( \frac{ \pi k }{L} (2 i - 1)   \right) ,  ~ k = 1, \ldots,  \left\lfloor \frac{L-1}{2} \right\rfloor, \\
v_{2k}(i) &= \sqrt{ 2 / L} \sin \left( \frac{ \pi k }{L} (2 i - 1)   \right), ~ k = 1, \ldots,  \left\lfloor \frac{L-1}{2} \right\rfloor.
\end{align*}

\subsection{Proofs of Lemmas \ref{th:a_xdagger a_x estimate} and \ref{th:f estimate}}

In this section, we give the still outstanding proofs of Lemmas \ref{th:a_xdagger a_x estimate} and \ref{th:f estimate}.

\begin{proof}[Proof of \cref{th:a_xdagger a_x estimate}]
	Let $x_0 = (1,\ldots,1) \in \Lambda_L$.
	By the formula for the one-particle density matrix, cf. e.g. \cite[Prop. 5.2.28]{bratellirobinsion2} and by translation-invariance we find
	\begin{align*}
		\expec{a_x^\dagger a_x}{\beta (H_0 + \gamma \N)}  &= \sc{ \delta_x, e^{-\beta\gamma} e^{-\beta(- J \Delta)} (\Id - e^{-\beta\gamma} e^{-\beta (- J\Delta)} )^{-1} \delta_x} \\
		&= \sc{ \delta_{x_0},   (e^{\beta  (-J \Delta+\gamma)}  -  \Id )^{-1} \delta_{x_0}} \\
		&= \sum_{i_1,\ldots,i_d=1}^L   (e^{\beta ( J \sum_{j=1}^d   \lambda_{i_j} + \gamma)}  -  1 )^{-1} \prod_{j=1}^{d}  \abs{  v_{i_j}(1) }^2.
	\end{align*}
	Let $\ell = \lceil \frac{L}{2} \rceil$. Then we have for any $c > 0$ 
	\begin{align}
		\sum_{i=1}^{L} ( e^{\beta (J \lambda_i +c)} -1)^{-1} \abs{v_i(1)}^2 
		&= \frac{1}{L} (e^{\beta c}  -  1 )^{-1}   +  \sum_{i=1}^\ell (e^{\beta (J \lambda_{2i} +c)}  -  1 )^{-1} \bigg(   \abs{  v_{2i}(1) }^2  
		 + \ind_{L = 2\ell+1} \abs{  v_ {2i+1}(1) }^2 \bigg)  \nonumber   \\
		&\leq \frac{1}{L} (e^{\beta c}  -  1 )^{-1}   + \frac{2}{L}  \sum_{i=1}^\ell   (e^{\beta (J \lambda_{2i} +c)}  -  1 )^{-1} .  \label{eq:use for induction}
	\end{align}
	Then, by induction over $d$ and using \eqref{eq:use for induction} for the induction step, we find as an upper bound
	\begin{align}
		\expec{a_x^\dagger a_x}{\beta (H_0 + \gamma \N)} \leq    \frac{1}{L^d}\sum_{k=0}^{d} \binom{d}{k} 2^k  \sum_{i_1, \ldots, i_k=1}^\ell  (e^{\beta ( J \sum_{j=1}^k \lambda_{2i_j} +\gamma)}  -  1 )^{-1}, \label{eq:riemann sum}
	\end{align}
	where the sum over the $i_1, \ldots, i_k$ is defined to be one if $k=0$. 
	This can be proven like the binomial theorem, more precisely, 
\begin{align*}
	\sum_{i_1,\ldots,i_{d+1}=1}^L   &(e^{\beta ( J \sum_{j=1}^{d+1}   \lambda_{i_j} + \gamma)}  -  1 )^{-1} \prod_{j=1}^{d+1}  \abs{  v_{i_j}(1) }^2 \\ &\leq \frac{1}{L^d}\sum_{k=0}^{d} \binom{d}{k} 2^k  \sum_{i_1, \ldots, i_k=1}^\ell  \sum_{i_{d+1}=1}^L (e^{\beta ( J \sum_{j=1}^k \lambda_{2i_j} + \lambda_{i_{d+1}} +\gamma)}  -  1 )^{-1}   \abs{  v_{i_{d+1}}(1) }^2  \\
	&\leq 	\frac{1}{L^{d+1}}\sum_{k=0}^{d} \binom{d}{k} 2^k  \sum_{i_1, \ldots, i_k=1}^\ell  \left(  (e^{\beta ( J \sum_{j=1}^k \lambda_{2i_j} +\gamma)}  -  1 )^{-1}   + 2 \sum_{i_{d+1}=1}^\ell   (e^{\beta ( J \sum_{j=1}^{k+1} \lambda_{2i_j} +\gamma)}  -  1 )^{-1} \right) \\
	&= \frac{1}{L^{d+1}}\sum_{k=0}^{d+1} \binom{d+1}{k} 2^k  \sum_{i_1, \ldots, i_k=1}^\ell  (e^{\beta ( J \sum_{j=1}^k \lambda_{2i_j} +\gamma)}  -  1 )^{-1},
\end{align*}
where we use the induction hypothesis in the first step and \eqref{eq:use for induction}  for the second step. 
	With $2\ell-1 \leq L$ the first terms for $k=0, \ldots, d-1$ in \eqref{eq:riemann sum}  can be estimated by
	\begin{align*}
		\frac{(e^{\beta \gamma} -1)^{-1}}{L^d} \sum_{k=0}^{d-1} \binom{d}{k} (2 \ell)^k &\leq 
		\frac{(e^{\beta \gamma} -1)^{-1}}{L^d} \sum_{k=0}^{d-1} \binom{d}{k} (L+1)^k =
		 \frac{(e^{\beta \gamma} -1)^{-1}}{L^d} \left( (L+2)^d - (L+1)^d  \right) 
		  = \epsilon_2.
	\end{align*}
	%
	The term for $k=d$ in \eqref{eq:riemann sum} represents a Riemann sum of a function, decreasing in each argument, which is evaluated at the minimal points of the hypercubes $[(i_1-1) \pi/2, i_1 \pi/2] \times \ldots \times [(i_d-1) \pi/2, i_d \pi/2]$, i.e.,
	\begin{align*}
		&\frac{2^d}{L^d}  \sum_{i_1, \ldots, i_d=1}^\ell  (e^{\beta ( J \sum_{j=1}^d \lambda_{2i_j} +\gamma)}  -  1 )^{-1}  \\
		&= \frac{(2\ell)^d}{L^d \ell^d}   \sum_{i_1, \ldots, i_d=1}^\ell  (e^{\beta ( 4J \sum_{j=1}^d  \sin^2( i_j \pi / L) +\gamma)}  -  1 )^{-1}  \\ 
		&\leq  \frac{(L+1)^d}{L^d \ell^d}    \sum_{i_1, \ldots, i_d=1}^\ell  (e^{\beta ( 4J \sum_{j=1}^d  \sin^2( i_j \pi / (2\ell)) +\gamma)}  -  1 )^{-1}  \\ 
		&\leq   \frac{(L+1)^d}{L^d} \int_{[0,1]^d} (e^{\beta ( 4J \sum_{j=1}^d  \sin^2( x_j \pi / 2 ) +\gamma)}  -  1 )^{-1} \d x \\
		&=  2^d \frac{(L+1)^d}{L^d}  f(\gamma, \beta, J). \qedhere
	\end{align*}
\end{proof}

\begin{proof}[Proof of \cref{th:f estimate}]
	We have for any $\alpha \in (0,1)$,
	\begin{align*}
		&\int_{[0,\frac{1}{2}]^d}  (e^{4J\beta \sum_{j=1}^d  \sin^2 ( \pi x_j ) +\beta \gamma } - 1)^{-1} \d x \\
		&= e^{- \alpha \beta \gamma } \int_{[0,\frac{1}{2}]^d} (e^{4J\beta\sum_{j=1}^d  \sin^2 ( \pi x_j ) + (1-\alpha) \beta \gamma } - e^{- \alpha \beta \gamma })^{-1} \d x  \\
		&\leq e^{- \alpha \beta \gamma } \int_{[0,\frac{1}{2}]^d} (e^{4J\beta\sum_{j=1}^d  \sin^2 ( \pi x_j ) + (1-\alpha) \beta \gamma } - 1)^{-1} \d x  \\
		&\leq e^{- \alpha \beta \gamma } \int_{[0,\frac{1}{2}]^d}  ( (1-\alpha) \beta \gamma)^{-1} \d x  = \frac{1}{2^d} \frac{e^{-\alpha \beta \gamma}}{(1-\alpha) \beta \gamma}.
	\end{align*}
	Then taking $\alpha \to 0$ proves \cref{th:f estimate}.
\end{proof}

%
%

\end{document}